\documentclass{sig-alternate}

\usepackage{graphicx}
\usepackage{subfigure}

\usepackage{array}
\usepackage{extarrows}

\usepackage{algorithm}
\usepackage{algorithmicx}
\usepackage{algpseudocode}
\usepackage{booktabs}

\usepackage{color}



\begin{document}

\title{
Mining Frequent Neighborhood Patterns in Large Labeled Graphs
\titlenote{The work was done when the first author was visiting Microsoft Research Asia.}
}

\numberofauthors{2}

\author{
\alignauthor
Jialong Han\\
       \affaddr{Renmin University of China}\\
       \email{jialonghan@gmail.com}
\alignauthor
Ji-Rong Wen\\
       \affaddr{Microsoft Research Asia}\\
       \email{jrwen@microsoft.com}
}

\maketitle

\conferenceinfo{KDD'13}{August 11-14, 2013, Chicago, USA.}
\CopyrightYear{2013}

\begin{abstract}

Over the years, frequent subgraphs have been an important sort of targeted patterns in the pattern mining literatures, where most works deal with databases holding a number of graph \emph{transactions}, e.g., chemical structures of compounds.
These methods rely heavily on the \emph{downward-closure property} (DCP) of the support measure to ensure an efficient pruning of the candidate patterns.
When switching to the emerging scenario of single-graph databases such as Google Knowledge Graph and Facebook social graph, the traditional support measure turns out to be trivial (either 0 or 1).
However, to the best of our knowledge, all attempts to redefine a single-graph support resulted in measures that either lose DCP, or are no longer semantically intuitive.

This paper targets mining patterns in the single-graph setting.
We resolve the ``DCP-intuitiveness'' dilemma by shifting the mining target from frequent subgraphs to frequent neighborhoods.
A neighborhood is a specific topological pattern where a vertex is embedded, and the pattern is frequent if it is shared by a large portion (above a given threshold) of vertices.
We show that the new patterns not only maintain DCP, but also have equally significant semantics as subgraph patterns.
Experiments on real-life datasets display the feasibility of our algorithms on relatively large graphs, as well as the capability of mining interesting knowledge that is not discovered in prior works.

\end{abstract}

\category{H.2.8}{Database Management}{Database Applications}[Data mining]

\terms{Algorithms, Experimentation}

\keywords{Graph Mining}

\newtheorem{theorem}{Theorem}
\newtheorem{property}{Property}
\newdef{definition}{Definition}
\newtheorem{example}{Example}

\section{Introduction}

\begin{figure}
  \centering
    \subfigure[47.0\% of authors have at least two papers.]{
    \label{fig:AuthorPaper:b} 
    \includegraphics[scale=0.75]{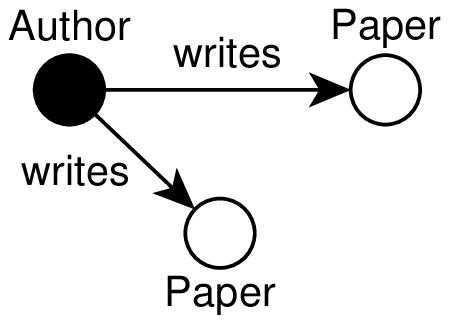}}
  \hspace{0.1\columnwidth}
 \subfigure[9.1\% of authors once cited their own paper.]{
    \label{fig:AuthorPaper:a} 
    \includegraphics[scale=0.75]{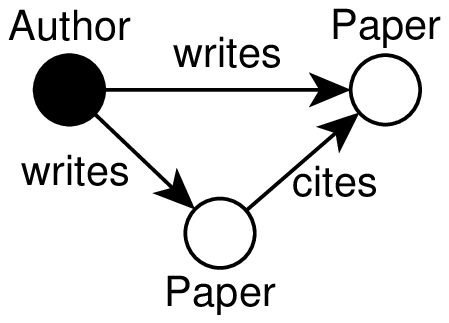}}
  \caption{Neighborhood patterns mined from a public citation network dataset, together with support ratios}
  \label{fig:AuthorPaper} 
\end{figure}

Since Agrawal et al. introduced the concept of Association Rule Mining\cite{ARM} in 1993, frequent itemset mining, which is the core subtask of association rule mining, has resulted in fruitful follow-up works.
Among the horizontal explorations which target mining substructures more expressive than a subset, including subsequences, subtrees, and subgraphs, the Frequent Subgraph Mining (we denote it as FSM later for short) problem turns out to be the most expressive.

In the typical \emph{graph-transaction setting}, the database consists of a large number of \emph{transactions}, e.g., chemical structures of small molecules.
The measure of frequency, a.k.a. support, is then naturally defined as how many transactions a given pattern is observed as a subgraph of.
This definition provides clear semantics in applications.
For example, an atom group commonly found among a set of organic compounds may indicate that they potentially share some properties.
Moreover, it also satisfies the \emph{downward-closure property} (DCP), which requires that the support of a pattern must not exceed that of its sub-patterns.
This property is essential to all frequent pattern mining algorithms, as it enables safely pruning a branch of infrequent patterns in the search space for efficiency.

Nevertheless, when switching to the \emph{single-graph setting}, i.e., the database is itself a large graph and the knowledge inside the single graph is of major concern, the definition of support by counting transactions easily fails because the support of any pattern is simply 0 or 1.
In other words, this definition cannot quantify our intuition that a subgraph occurs ``frequently'' in a large graph.

Indeed, it is not difficult to obtain a support definition based on the count of ``distinct'' matches.
However DCP simply does not hold for any straightforward ones.
Consider Figure \ref{fig:AuthorPaper:b} describing the event ``author $X$ writes paper $Y$ and $Z$''.
When it is matched to a toy database consisting of exactly \underline{one} author writing $n$ papers, the number of different matches for the three vertices is $n(n-1)$.
For the sub-pattern ``author $X$ writes one paper $Y$'', the count is $n<n(n-1)$.
Even if we take into consideration the automorphism of the pattern, and regard all matches as identical if they involve the same set of vertices in the database, it is still the case that $n< {n\choose 2}$.

Intuitively, complicated patterns have larger support counts because they tend to reuse elements in the database.
Inspired by this observation, Vanetik et al.\cite{MIS2_PathAsBuildingBlock} and Kuramochi et al.\cite{MIS1} redefined the support in the single-graph setting to be the maximum number of edge-disjoint matches, which satisfies DCP.
According to them, the support of the ``an-author-writes-two-papers'' pattern on the toy database should be $\lfloor n/2\rfloor$ ($<n$), since we can find almost such number of matches without reusing any $\xrightarrow{writes}$ edge.
Besides the problem complexity increased, we argue that they introduces non-determinism to the support computation, which disobeys the human sense that counting is a ``one-by-one'' procedure.

\begin{figure}
\centering
\epsfig{file=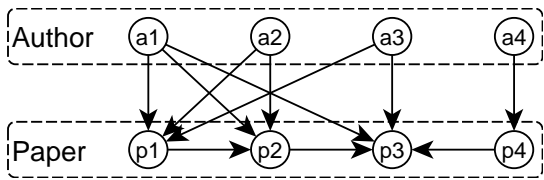,scale=0.7}
\caption{A toy database. ``writes'' and ``cites'' labels are omitted.\label{fig:ToyData}}
\end{figure}

Since it cannot be avoided that traditional matches reuse elements, it seems to be the fact that DCP and intuitiveness can never be both achieved in any subgraph support that counts matches of the entire pattern.
However, if we assign the count operator on a fixed vertex in a pattern, and treat two matches as identical if they match the fixed vertex to the same vertices in the database, we obtain a support measure that both has DCP and is intuitive.
Consider Figure \ref{fig:AuthorPaper:b} again, where the author vertex is painted solid and to be counted.
On the toy database in Figure \ref{fig:ToyData}, though $(a_1,p_1,p_2)$ and $(a_1,p_1,p_3)$ are both matches for $(X,Y,Z)$, they only contribute one to the support because they share the same author $a_1$.
Moreover, $a_2$ and $a_3$ may also serve to compose legal matches, so the overall support is 3.
Similarly, for Figure \ref{fig:AuthorPaper:a} which is a super-pattern of Figure \ref{fig:AuthorPaper:b}, only by matching the author vertex to $a_1$ or $a_2$ can we appropriately arrange the two paper vertices, so the support is 2 (<3).
In fact, under our new definition of support, the two patterns describe ``authors who have at least two papers'' and ``authors who once cited their own paper'', respectively.
Since this sort of pattern characterizes vertices that are embedded in a given local topology, we denote them as \emph{neighborhood patterns}, and the corresponding mining problem as \textbf{F}requent \textbf{N}eighborhood \textbf{M}ining (denoted as FNM for short).
By neighborhood we refer to not only other vertices directly linked to the counted vertex as defined in the graph theory terminology, but also the vertices, edges indirectly connected, along with their labels.

Prior to us, \cite{TreeQuery,GraphProperty} have already studied similar problems by defining the number of such ``partial matches'' as the support of a graph structure.
However, only tree-like patterns were addressed as their mining targets.
Instead, we try to remove the constraint that cycles are not allowed, and investigate the new type of pattern in a generalized way that the FSM problem was studied.
Our contribution lies in that we established rich and deep connections between the two problems from the aspects of basic definitions, problem complexity, solutions, and possible optimizations.
By operating on a real-life dataset we also confirm that trading the problem complexity for better expressivity is worthwhile, for patterns with cycles can lead to more informative and interesting discoveries on the data being investigated.
E.g., taking both Figure \ref{fig:AuthorPaper:a}, \ref{fig:AuthorPaper:b}, and the support ratios in their captions into consideration, we can conclude that among all authors who are ``able'' to cite their own paper (having at least two papers), one out of five will do so.

The rest of this paper is organized as follow.
In Section 2 we formalize the FNM problem, where the \emph{Pivoted Subgraph Isomorphism} problem is identified as the core of FNM, like what subgraph isomorphism is to the FSM problem.
Section 3 discusses our basic solution and further optimization for FNM.
We prove that the building blocks of FNM are not as trivial as those of FSM, while some optimization for the latter one can still be adapted for ours.
In Section 4 we conduct experiments on real datasets to verify the performance of our solution and the utility of the mined neighborhood patterns.
After introducing related and future works we finally conclude.

\section{Problem Formulation}

In this section, first, we introduce basic notations to describe a labeled graph and a neighborhood pattern.
With the notations we then formulate the decision problem of checking whether a neighborhood pattern matches a given vertex in a large graph as the Pivoted Subgraph Isomorphism problem.
We prove that, as the name indicates, this problem is \textsc{np}-complete, making our problem as difficult as the FSM one.
After defining the support of a neighborhood pattern as the number of vertices in the database it could be matched to, we briefly justify its downward closure property.
Finally, more space will be given to some discussions on the expressivity of our problem formulation.

\subsection{Labeled Graphs}

\begin{definition}
A (directed) \textbf{labeled graph} is a 5-tuple $G=\langle V,L_V,E,\Sigma_V,\Sigma_E\rangle$, where
\begin{itemize}
\item $V$ is the set of all vertices;
\item $\Sigma_V$ and $\Sigma_E$ denote label names used to form vertex and edge labels, respectively;
\item $L_V\subseteq V\times \Sigma_V$ is the set of all vertex labels;
\item $E\subseteq V\times V\times \Sigma_E$ is the set of labeled edges;
\end{itemize}
\end{definition}

Note that unlike \cite{AGM,ClosedPattern,FSG}, we allow an arbitrary number of labels on a single vertex.
This is a reasonable generalized assumption for possible applications.
For example in a knowledge base consisting of objects and their relationship, an object may be a father, a politician, and a vegetarian at the same time.
It's also possible that a vertex has no label, i.e., we know nothing about the object, except its existence.
On the other hand, parallel edges carrying distinct label names may link a pair of vertices to model multiple relations simultaneously existing between two objects.
We do not allow edges with no label because we do not process the weak relation of ``arbitrary'' or ``universal'' relation.
Without losing any generality, we do not allow loops, i.e., edges starting and ending with the same vertex.
We can use a vertex label with a specially designed name to replace the loop on a vertex.
We use ``elements'' as the joint name of vertex labels and labeled edges, and define $(|L_V|+|E|)$, i.e., the number of elements, as the size of a labeled graph.

\subsection{Pivoted Subgraph Isomorphism}

\begin{definition}
A \textbf{pivoted graph} is a tuple $\mathcal{G}=\langle G,v_f\rangle$, where
\begin{itemize}
\item $G$ is a labeled graph;
\item $v_f\in V(G)$ is called the \emph{pivot} of $\mathcal{G}$.
\end{itemize}
\end{definition}

Actually, neighborhood patterns are essentially pivoted graphs.
By introducing the concept ``pivot'', we aim to characterize the semantics of fixing a vertex in a subgraph to form a neighborhood pattern, or selecting a vertex in a database to match a pattern to.

\begin{definition}
A pivoted graph $\mathcal{G}_1$ is \textbf{pivoted subgraph isomorphic} to $\mathcal{G}_2$, denoted as $\mathcal{G}_1 \subseteq_f \mathcal{G}_2$, if and only if there exists an injective $f:V(\mathcal{G}_1)\rightarrow V(\mathcal{G}_2)$ such that
\begin{itemize}
\item $\forall (v,l)\in L_V(\mathcal{G}_1)$, $(f(v),l)\in L_V(\mathcal{G}_2)$;
\item $\forall (v_1,v_2,l)\in E(\mathcal{G}_1)$, $(f(v_1),f(v_2),l)\in E(\mathcal{G}_2)$;
\item $f(v_f(\mathcal{G}_1))=v_f(\mathcal{G}_2)$.
\end{itemize}
\end{definition}

The first two descriptions describe that the isomorphic function preserves both vertex labels and edge labels.
In addition, the special isomorphism between pivoted graphs requires that the isomorphic function maps the pivot of $\mathcal{G}_1$ to that of $\mathcal{G}_2$.
As a subtask of FNM, the problem of deciding whether a pivoted graph is pivoted subgraph isomorphic to another is in \textsc{np}-complete.
\begin{theorem}\label{theorem:npc}
The problem of testing pivoted subgraph isomorphism between two arbitrary pivoted graphs is \textsc{np}-complete.
\end{theorem}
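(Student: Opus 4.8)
The plan is to establish the two standard requirements for \textsc{np}-completeness separately. Membership in \textsc{np} is immediate: a candidate witness is simply the injection $f$ itself, and given $f$ one can check in polynomial time that it is injective, that every vertex label $(v,l)\in L_V(\mathcal{G}_1)$ has its image $(f(v),l)$ in $L_V(\mathcal{G}_2)$, that every labeled edge is preserved, and that $f(v_f(\mathcal{G}_1))=v_f(\mathcal{G}_2)$. The substance of the theorem therefore lies in the hardness direction.

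For \textsc{np}-hardness I would reduce from ordinary (labeled) subgraph isomorphism, which is well known to be \textsc{np}-complete and is precisely the ``unpivoted'' ancestor of our problem. Given an instance asking whether a graph $H$ is subgraph isomorphic to a graph $G$, I construct two pivoted graphs as follows. Introduce a fresh edge label $\star\notin\Sigma_E$ and a brand new unlabeled vertex. In $H$ I add such a vertex $v_f^1$ together with a directed edge $(v_f^1,v,\star)$ to every original vertex $v\in V(H)$, and declare $v_f^1$ the pivot, obtaining $\mathcal{G}_1$; I build $\mathcal{G}_2$ from $G$ in exactly the same way, adding a universal pivot $v_f^2$ joined by $\star$-edges to all of $V(G)$. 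This transformation is clearly computable in time linear in the size of the input graphs.

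The correctness argument then has two directions. If $g:V(H)\to V(G)$ is a subgraph isomorphism, extending it by $g(v_f^1)=v_f^2$ yields a valid pivoted isomorphism: injectivity survives because $v_f^2$ is new, the original label and edge constraints are inherited from $g$, and each added edge $(v_f^1,v,\star)$ is satisfied since $v_f^2$ is $\star$-adjacent to the whole of $V(G)$. Conversely, any pivoted isomorphism $f$ is forced to send $v_f^1$ to $v_f^2$; by injectivity the remaining vertices of $H$ must land in $V(G)$ rather than on $v_f^2$, and because $\star$ occurs on no original edge, the restriction $f|_{V(H)}$ respects exactly the labels and edges of $H$, hence is a subgraph isomorphism into $G$.

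I expect the only delicate point to be verifying that the pivot gadget neither destroys solutions nor manufactures spurious ones. The two design choices that make this work are (i) the freshness of $\star$, which isolates the newly added edges from the original matching conditions so that restricting $f$ recovers a genuine subgraph isomorphism, and (ii) the universality of the pivot's $\star$-neighborhood, which guarantees the extra edges impose no real constraint in the forward direction. Checking these invariants carefully — especially that edge directedness is handled consistently — is where the bookkeeping of the proof will concentrate.
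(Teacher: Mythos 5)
Your proof is correct and follows essentially the same route as the paper's: a reduction from subgraph isomorphism in which a fresh universal pivot vertex is attached to all vertices of each graph, so that pivoted isomorphism of the augmented graphs coincides with ordinary subgraph isomorphism of the originals. Your version is somewhat more careful than the paper's (explicit \textsc{np} membership, a fresh edge label $\star$ to respect the requirement that every edge be labeled, and both correctness directions spelled out), but the underlying construction is identical.
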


We prove in the appendix by reducing it to the classical subgraph isomorphism problem.

\begin{property}\label{property:trans}
The relation $\subseteq_f$ is transitive.
\end{property}

Proof is omited due to the limited space.

\subsection{Support Measure and its PCP}\label{sec:PCP}
\begin{definition}\label{def:support}
Given a large labeled graph $G$, a neighborhood pattern $\mathcal{P}$ \textbf{matches} $v\in V(G)$, if $\mathcal{P} \subseteq_f \langle G,v \rangle$.
Denoting the set of all vertices in $G$ that $\mathcal{P}$ matches as $M_G(\mathcal{P})=\{v\in V(G)| \mathcal{P}\subseteq_f\langle G,v \rangle\}$, we define the support of $\mathcal{P}$ in $G$ as the size of $M_G(\mathcal{P})$, and call $\mathcal{P}$ a \textbf{frequent neighborhood pattern} of $G$, if its support is above a given threshold $\tau$.
\end{definition}

With the support measure defined, the frequent neighborhood mining problem is simply finding all frequent neighborhood patterns in a large graph, with respect to a given threshold.
To control the problem complexity, we further requires the mined patterns be connected, i.e., paths exists between every vertex and the pivot.
In later discussions, sometimes we consider the operation of removing a labeled edge from a pattern.
If the removal leads to an \emph{isolated} vertex, i.e., a vertex without any vertex label or edge associated to it, we further remove the vertex to make the resulted pattern legal.
If we adopt $\subseteq_f$ to describe the sub-pattern/super-pattern relationship between neighborhood patterns, the fact that a pattern cannot be more frequent than any of its sub-patterns is directly derived via Property \ref{property:trans}.

\begin{theorem}
The support measure defined in Definition \ref{def:support} satisfies the downward closure property.
\end{theorem}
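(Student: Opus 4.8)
The plan is to reduce the statement entirely to the transitivity of $\subseteq_f$ established in Property \ref{property:trans}. First I would pin down the precise meaning of the downward closure property in this setting: adopting $\subseteq_f$ as the sub-pattern/super-pattern order, I must show that whenever $\mathcal{P}_1 \subseteq_f \mathcal{P}_2$ (so that $\mathcal{P}_1$ is the sub-pattern and $\mathcal{P}_2$ the super-pattern), the supports satisfy $|M_G(\mathcal{P}_2)| \le |M_G(\mathcal{P}_1)|$. Since support is defined in Definition \ref{def:support} as the cardinality of the match set $M_G(\cdot)$, it suffices to prove the set containment $M_G(\mathcal{P}_2) \subseteq M_G(\mathcal{P}_1)$, after which monotonicity of cardinality under inclusion delivers the inequality at once.

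The core step is this containment. I would take an arbitrary $v \in M_G(\mathcal{P}_2)$; by the definition of $M_G$ this means $\mathcal{P}_2 \subseteq_f \langle G, v\rangle$. Combining the hypothesis $\mathcal{P}_1 \subseteq_f \mathcal{P}_2$ with this fact and invoking transitivity (Property \ref{property:trans}) yields $\mathcal{P}_1 \subseteq_f \langle G, v\rangle$, i.e.\ $v \in M_G(\mathcal{P}_1)$. As $v$ was arbitrary, $M_G(\mathcal{P}_2) \subseteq M_G(\mathcal{P}_1)$, which completes the argument.

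Because everything hinges on transitivity, the real work has in fact already been discharged in the (omitted) proof of Property \ref{property:trans}, where one composes the two injective, label- and edge-preserving, pivot-respecting maps witnessing $\mathcal{P}_1 \subseteq_f \mathcal{P}_2$ and $\mathcal{P}_2 \subseteq_f \langle G, v\rangle$ into a single such map, verifying that the composite stays injective, still preserves all vertex and edge labels, and still carries the pivot of $\mathcal{P}_1$ to $v$. Consequently I do not expect a genuine obstacle inside the DCP proof itself; the only point deserving care is conceptual rather than technical. One must check that the sub-pattern order actually exercised by the mining procedure, which removes a labeled edge and then deletes any resulting isolated vertex, is a special case of $\subseteq_f$, so that the concrete pruning steps are covered by this general inequality. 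Once that identification is made explicit, the inclusion map of the smaller pattern into the larger one witnesses the required $\subseteq_f$ relation, and the theorem applies verbatim.
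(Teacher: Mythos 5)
Your proposal is correct and follows exactly the paper's own argument: reduce DCP to the transitivity of $\subseteq_f$ (Property \ref{property:trans}), deduce the containment $M_G(\mathcal{P}_2) \subseteq M_G(\mathcal{P}_1)$, and conclude the support inequality from cardinality monotonicity. The additional remarks on how transitivity would be proved and on the edge-removal operation being an instance of $\subseteq_f$ are sound but not part of the paper's (equally brief) proof sketch.
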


\begin{proof}
(Sketch) Given $G$, $\mathcal{P}_1$, and $\mathcal{P}_2$, where $\mathcal{P}_1 \subseteq_f \mathcal{P}_2$.
For any $v\in G$, if $\mathcal{P}_2 \subseteq_f \langle G,v\rangle$, according to Property \ref{property:trans} we have
$\mathcal{P}_1 \subseteq_f \langle G,v\rangle$.
Therefore, $M_G(\mathcal{P}_1)\supseteq M_G(\mathcal{P}_2)$ and it holds that $|M_G(\mathcal{P}_1)|\ge |M_G(\mathcal{P}_2)|$.
\end{proof}

\section{Mining Algorithms}

In this section, we describe the algorithm for mining frequent neighborhood patterns, which follows the apriori breadth-first search paradigm.
We reveal the major technical difference between mining subgraph and neighborhood patterns, that is, the latter task has more complicated ``building blocks''.
We prove that the building blocks in our task are no longer all frequent size-1 patterns.
Instead, they consists of all \emph{frequent paths}, and require special treatments.
Besides, similarities between the solutions and optimizations of FNM and FSM are also discussed.

\subsection{Building Blocks}

Typically, the traditional FSM algorithm generates subgraph patterns in an increasing-size manner.
First, all frequent subgraphs of size 1 are pre-computed as ``building blocks''.
Then candidates of size $K$ are obtained by joining pattern pairs of size $(K-1)$ that differ by only one vertex or edge, after which false positives are filtered by a verification against the database.
We indicate that the join ensures a complete result because every candidate of size $K\ge 2$ is \emph{decomposable}, that is, we can always find \underline{two} distinguished elements, after removing either one we obtain a connected, thus legal, sub-pattern of size $(K-1)$.
They may be isomorphic to each other, but their join takes the candidate into our consideration.

For our neighborhood mining problem, however, it is not the case.
Consider the path-like neighborhood patterns in Figure \ref{fig:PathPattern}.
Obviously, to find a connected sub-pattern of size $(K-1)$ for Figure \ref{fig:PathPattern:nolb}, we have only one choice of removing the edge and vertex at the end of the path.
Meanwhile, in the case of Figure \ref{fig:PathPattern:lb}, only the vertex label to the right can be removed.
Otherwise, the resulted patterns will be illegally unconnected.
Since they are not \emph{decomposable}, it's impossible to derive them by joining two smaller patterns.
Luckily, the following theorem clarifies that these special patterns are only limited to what is described in Figure \ref{fig:PathPattern} and Definition \ref{def:PathPattern}, which enables us to treat them as building blocks and pre-process them in advance.

\begin{figure}
  \centering
  \subfigure[Path pattern without vertex labels]{
    \label{fig:PathPattern:nolb} 
    \begin{minipage}[b]{0.5\textwidth}
    \centering
    \includegraphics[scale=0.5]{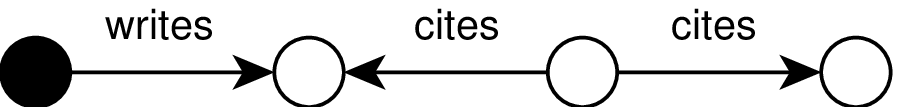}
    \end{minipage}
    }
    \\[10pt]
  \subfigure[Path pattern with a vertex label]{
    \label{fig:PathPattern:lb} 
    \begin{minipage}[b]{0.5\textwidth}
    \centering
    \includegraphics[scale=0.5]{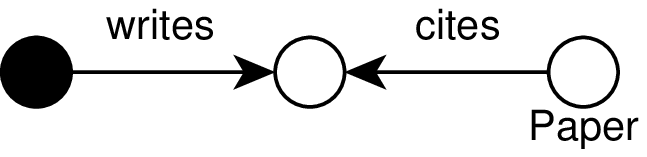}
    \end{minipage}
    }
  \caption{Two variations of path patterns}
  \label{fig:PathPattern} 
\end{figure}

\begin{definition}\label{def:PathPattern}
A neighborhood pattern is a \textbf{path pattern} if the following statements holds
\begin{itemize}
\item It is a path of labeled edges (directions are ignored) where the pivot is on one end of the path.
\item It contains at most one vertex label, which (if exists) must appear on the other end of the path.
\end{itemize}
\end{definition}

\begin{theorem}\label{theorem:main}
A neighborhood pattern is not decomposable, iff. it is a path pattern.
\end{theorem}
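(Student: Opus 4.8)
The plan is to argue in terms of which single \emph{elements} (vertex labels or labeled edges) can be deleted from a connected pattern while the result stays connected; call such an element \emph{removable}. By the definition of decomposability, a pattern is decomposable exactly when it has at least two removable elements, so the theorem reduces to the equivalence ``a connected pattern has at most one removable element iff it is a path pattern.'' The key preliminary step is a clean characterization of removability. First I would note that deleting a vertex label keeps the pattern connected precisely when the label sits on a vertex that is incident to at least one edge (so the vertex is never cleaned away); hence in any pattern containing an edge, \emph{every} vertex label is removable. For edges I would split on the bridge/non-bridge dichotomy: a non-bridge edge is always removable, whereas a bridge edge is removable only when its non-pivot endpoint is an unlabeled degree-one leaf, so that deleting the edge isolates that endpoint and the subsequent cleanup restores connectivity; if the non-pivot side carries a label or has further edges, the bridge is not removable. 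Pinning down this characterization, in particular tracking the cascade of removing an isolated vertex after an edge deletion, is the part that needs the most care.

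For the ($\Leftarrow$) direction I would apply the characterization to the two shapes in Definition~\ref{def:PathPattern}. In an unlabeled path pattern $v_f=u_0-u_1-\cdots-u_k$ every edge is a bridge, and only the terminal edge $(u_{k-1},u_k)$ has an unlabeled degree-one far endpoint, so it is the unique removable element. In a path pattern carrying its one label on the far end $u_k$, that terminal edge is no longer removable (its far endpoint is labeled), so the single label on $u_k$ becomes the only removable element. Either way there is at most one removable element, hence the pattern is not decomposable.

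For the ($\Rightarrow$) direction I would prove the contrapositive: a connected pattern that is \emph{not} a path pattern has at least two removable elements. I would first dispose of the case where the underlying undirected multigraph is not a simple path. If it contains a cycle (including a pair of parallel edges), then every edge on that cycle is a non-bridge and hence removable, giving at least two. Otherwise it is a tree with a vertex of degree $\ge 3$, which forces at least three leaves and therefore at least two non-pivot leaves; each non-pivot leaf contributes its own removable element, namely its incident edge if the leaf is unlabeled, or its vertex label if it is labeled, and these contributions are distinct across leaves.

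The remaining case is when the underlying graph \emph{is} a simple path $w_0-\cdots-w_m$ yet the pattern still fails to be a path pattern, for one of three reasons. If the pivot is interior, both endpoints are non-pivot leaves and the leaf argument again yields two removable elements. If the pattern carries at least two vertex labels, then, since every vertex of a path of length $\ge 1$ is incident to an edge, all of its labels are removable, giving at least two (the degenerate single-vertex pattern with two labels also works, as each label is removable thanks to the presence of the other). Finally, if there is exactly one label but it does not sit on the far endpoint $w_m$, then $w_m$ is an unlabeled leaf and its terminal edge is removable, while the misplaced label, sitting on an edge-incident vertex, is a second and distinct removable element. Having covered every way of failing to be a path pattern, the contrapositive is complete. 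I expect the removability characterization and the bookkeeping of isolated-vertex cleanup to be the main obstacle; once those are settled, the case analysis is routine graph theory.
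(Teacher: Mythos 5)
Your proof is correct and takes essentially the same route as the paper's: the paper likewise proves necessity by exhibiting, in each failure case (two vertex labels, a cycle, two leaves, a label not on the leaf), two elements whose removal each leaves a connected legal sub-pattern, which is exactly your removable-element case analysis in contrapositive form. The differences are presentational only --- you isolate an explicit removability characterization (non-bridge edges, bridges with unlabeled degree-one far endpoints, labels on edge-incident vertices) and spell out the sufficiency direction, which the paper dismisses as apparent.
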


\begin{proof}

\begin{figure}
  \centering
  \subfigure[Case 1]{
    \label{fig:Proof:1}
    \includegraphics[scale=0.4]{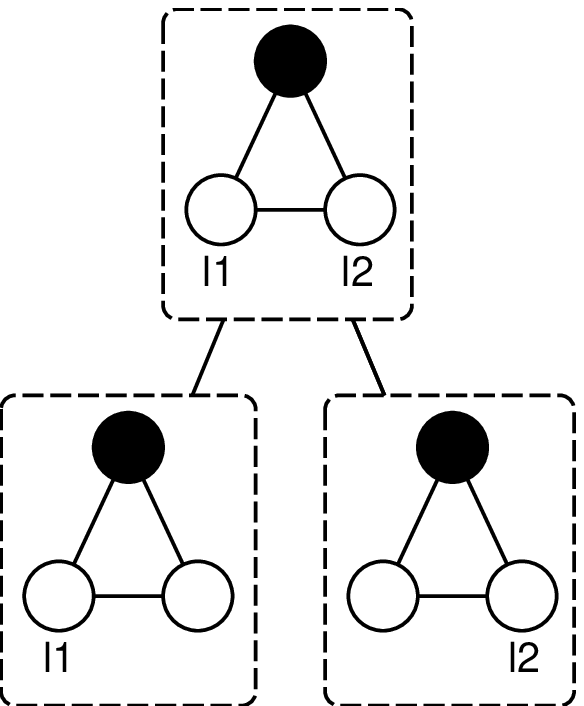}}
  \hspace{0.05\columnwidth}
  \subfigure[Case 2]{
    \label{fig:Proof:2}
    \includegraphics[scale=0.4]{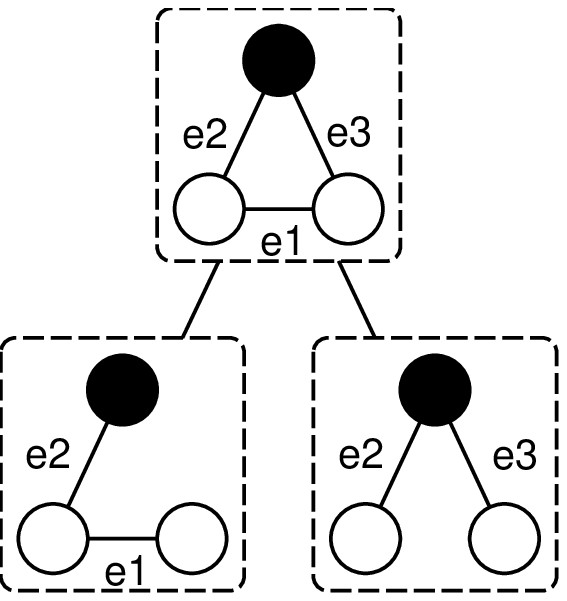}}
    \hspace{0.05\columnwidth}
  \subfigure[Case 3]{
    \label{fig:Proof:3}
    \includegraphics[scale=0.4]{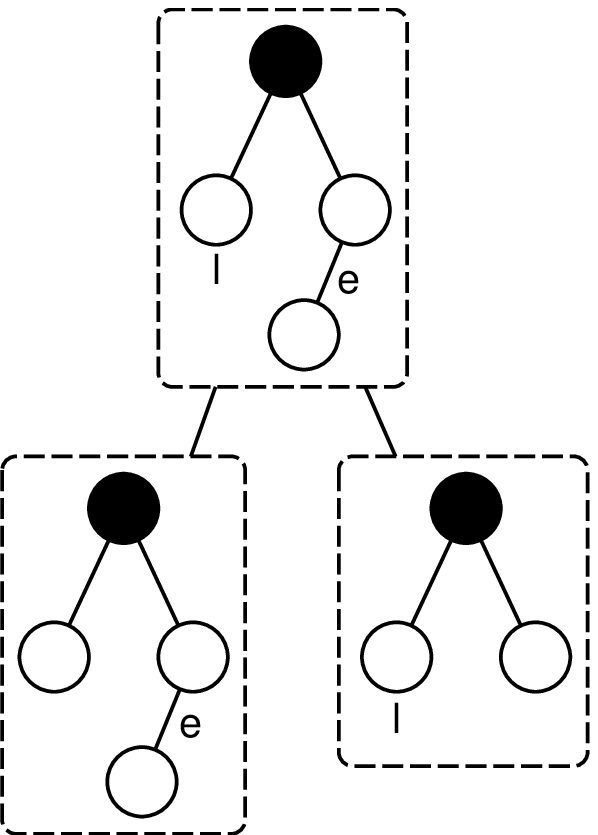}}
    \hspace{0.05\columnwidth}
  \subfigure[Case 4]{
    \label{fig:Proof:4}
    \includegraphics[scale=0.4]{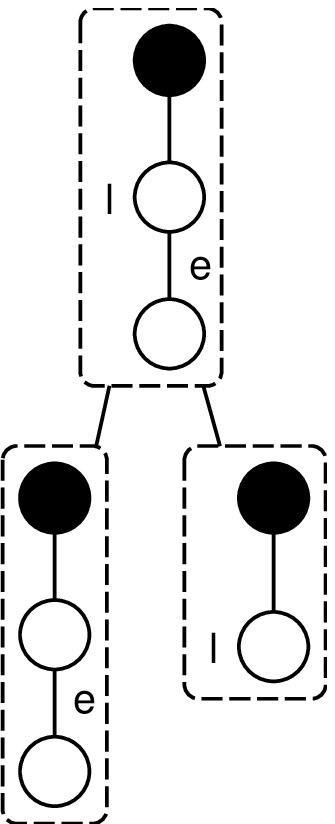}}
  \caption{Decomposable cases in the proof of Theorem \ref{theorem:main}. Labels with no direct influence on the proof are omitted.}
  \label{fig:Proof}
\end{figure}

The sufficiency of the theorem is apparent and has been briefly discussed above.
Therefore we only concentrate on proving the necessity.

If a neighborhood pattern $\mathcal{P}$ is not decomposable, it must have at most one vertex label.
Otherwise, we can arbitrarily choose two of them as $l_1$ and $l_2$, and decompose the pattern as $\mathcal{P}\setminus\{l_1\}$ and $\mathcal{P}\setminus\{l_2\}$, as is illustrated in Figure \ref{fig:Proof:1}.
Moreover, it must not contain cycles.
Otherwise, we arbitrarily choose two edges on the cycle as $e_1$ and $e_2$, and decompose it as $\mathcal{P}\setminus\{e_1\}$ and $\mathcal{P}\setminus\{e_2\}$ (Figure \ref{fig:Proof:2}).
Note that this does not harm the connectivity of the patterns since edges on a cycle are not cutting-edges.

So far, the shape of $\mathcal{P}$ has been limited to be a tree with at most one label.
We transform it to a rooted one, where the root is the pivot of the pattern.
This tree must have only one leaf.
If two, we can again remove them with associated edges respectively (in the case where the leaf possesses the only vertex label, we only remove the label instead) to decompose the pattern (Figure \ref{fig:Proof:3}).

Now the tree with only one leaf is actually a path.
But we still have to prove that if the tree contains a vertex label, it must be on the only leaf: if any vertex other than the leaf carries the label, removing the label and removing the leaf with associated edge respectively will get the pattern decomposed (Figure \ref{fig:Proof:4}).
\end{proof}

\subsection{Constructing Building Blocks}

\begin{algorithm}[t]
\caption{Building block construction\label{algo:BuildingBlock}}
\begin{algorithmic}[1]
\Require
    The single-graph database $G$, minimum support $\tau$
\Ensure
    All frequent path patterns

\State $queue\gets\{\epsilon\}$
\Repeat
\State $path\gets queue.Dequeue()$
\State $count[].Clear()$
\ForAll{$v \in V(G)$}\label{algoLine:vidCount1}
\ForAll{$nextStep \in v.Traverse(path)$}
\State $count[nextStep]\gets count[nextStep]+1$\label{algoLine:vidUpdate1}
\EndFor
\EndFor

\ForAll{$nextStep \in count.Keys()$}

\If{$count[nextStep]\ge \tau$}
\State $newPath\gets path.Append(nextStep)$
\State $R\gets R\cup\{newPath\}$
\If{$nextStep.IsEdgeStep()$}
\State $queue.Enqueue(newPath)$
\EndIf
\EndIf

\EndFor

\Until{$queue.Empty()=true$}\\
\Return $R$
\end{algorithmic}
\end{algorithm}

Being a special case of the general neighborhood patterns, path patterns can still be organized into a level-wise structure, or more exactly, a hierarchical one, which preserves the downward closure property.
The parent of each path pattern is uniquely found, by removing the vertex label or the vertex on the other end of the path than the pivot.
Thus, the level-wise search algorithm on such a structure deserves an ``extending'' approach to generate larger patterns from small ones, rather than the ``joining'' one used for non-building-blocks.

In Algorithm \ref{algo:BuildingBlock}, we describe the basic algorithm for finding frequent paths.
First, a queue used for the bread-first search is initialized with an empty path $\epsilon$.
When extending a path on the front of the queue, we traverse according to the pattern, each time with one vertex in $G$ as the starting point.
Note that for each starting point, we should not visit a vertex more than once.
Each traversal returns all possible moves when we arrive at the ending point(s) and try to take a next step.
E.g., we traverse along a ``$\bullet\xrightarrow{writes}\circ \xleftarrow{cites}\circ$'' path starting from vertex ``Jiawei Han'', and stop at the vertex of paper \cite{MIS1} (it cites \cite{ClosedPattern} of Jiawei Han), then possible next-steps on \cite{MIS1} may be following an ``cites'' edge to another unvisited paper it cites (such as paper \cite{AGM}) to produce Figure \ref{fig:PathPattern:nolb}, or terminate the path with a vertex label ``Paper'' to end up with Figure \ref{fig:PathPattern:lb}.
Each time a new next-step for the current starting point is discovered, it increases its counter by 1.
When all traversals are over, those next-steps with a count of more than $\tau$ are used to extend the path.
After saving all extended paths to the result set, non-terminated paths, i.e., new paths obtained by appending an edge rather than a vertex label such as Figure \ref{fig:PathPattern:nolb}, are added to the queue for further expansion.
The algorithm terminates when all extendable paths in the queue are consumed.

\subsection{Joining and Verifying}

\begin{algorithm}[t]
\caption{Frequent neighborhood mining\label{algo:FNM}}
\begin{algorithmic}[1]
\Require
    The single-graph database $G$, minimum support $\tau$
\Ensure
    All frequent neighborhood patterns

\State $fPaths\gets FrequentPaths(G)$
\State $f_1\gets fPaths.SelectSize(1)$
\State $k\gets 2$
\While{$f_{k-1}.Empty()=false$}
\ForAll{$1\le i\le j \le f_{k-1}.Count()$}\label{algoLine:vidJoin2}
\State $c_k\gets c_k\cup Join(f_{k-1}[i],f_{k-1}[j])$ \label{algoLine:join}
\EndFor
\ForAll{$\mathcal{P} \in c_k$}
\If{$G.CountSupport(\mathcal{P})\ge \tau$}\label{algoLine:vidCountUpdate2}
\State $f_k\gets f_k\cup \{\mathcal{P}\}$\label{algoLine:isomorphismcheck}
\EndIf
\EndFor
\State $f_k\gets f_k\cup fPaths.SelectSize(k)$\label{algoLine:difference}
\State $k \gets k+1$
\EndWhile\\
\Return $\bigcup_{k\ge 1}{f_k}$
\end{algorithmic}
\end{algorithm}

As is stated above, what distinguishes our problem from the traditional ones solved by a ``join-verify-join-\dots'' scheme is the fact that, large path patterns cannot be derived by joining smaller patterns, no matter these smaller ones are paths, trees, or else.
Therefore, the working flow of Algorithm \ref{algo:FNM} differs from other apriori-based subgraph mining algorithm only by Line \ref{algoLine:difference}.
This line adds path patterns of the current size to $F_k$, the frequent non-building-block ones of the same size, to ensure that larger patterns relying on them are not lost due to their absence.

Besides, our join operation at Line \ref{algoLine:join} is also worth an detailed explanation.
Roughly speaking, we determine whether two patterns should be joined via deleting one element from the first, and check whether the remaining structure is pivoted subgraph isomorphic to the second.
Notice that there may be multiple isomorphic mappings so the number of results produced by a single join may be more than one.
For each mapping, the deleted element is mapped to the right position in the second pattern, and inserted to produce a joining result.
If the removed element is a vertex label, the join is relatively easy.
But if it is an edge, the operation is a bit tricky.

\begin{figure}
  \centering
  \subfigure[Deletions may produce unconnected medium results.]{
    \label{fig:Join:unconnect}
    \begin{minipage}[b]{0.5\textwidth}
    \centering
    \includegraphics[scale=0.45]{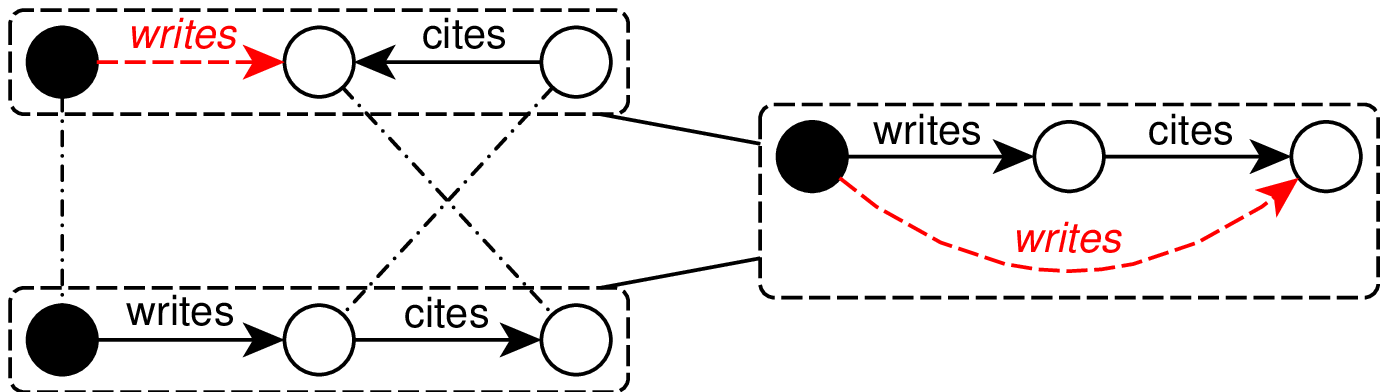}
    \end{minipage}
    }
    \\[10pt]
  \subfigure[Handling dangling edges]{
    \label{fig:Join:dangling} 
    \begin{minipage}[b]{0.5\textwidth}
    \centering
    \includegraphics[scale=0.45]{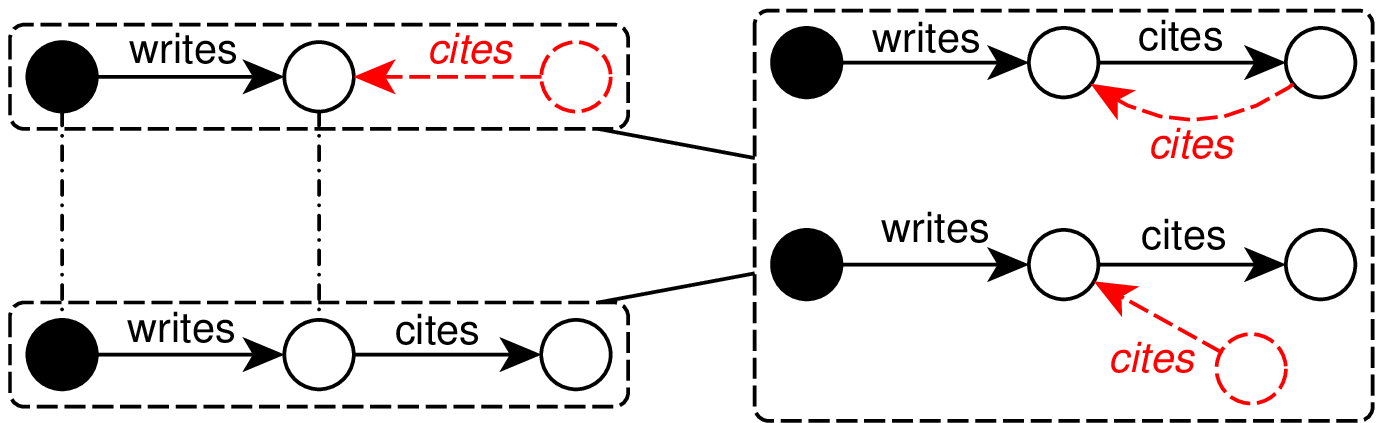}
    \end{minipage}
    }
  \caption{Cases of a single join}
  \label{fig:Join} 
\end{figure}

On the one hand, the remaining structure is not necessarily connected after the deletion of an edge.
Consider Figure \ref{fig:Join:unconnect}, where we are going to join the patterns ``authors having a cited paper'' and ``authors having a paper citing another''.
For the sake of the example and w.l.o.g., we assume that this join is in a branch of the search space where vertex labels are not introduced yet, while readers can still infer by the context that the pivots are author vertices, and the non-pivot vertices represent papers.
After deleting the $\xrightarrow{writes}$ edge marked with dotted line and italic label in the first pattern we obtain an unconnected structure.
The remaining structure is pivoted subgraph isomorphic to the second pattern, where the mapping is illustrated by dotted lines.
Since the paper vertex in the first pattern that the deleted edge points to is mapped to the second paper vertex in the second pattern, we restore the mapped $\xrightarrow{writes}$ edge between the author vertex and the second paper vertex to generate the result on the right.
Obviously, this pattern is the skeleton of that illustrated in Figure \ref{fig:AuthorPaper:a}.

On the other hand, new vertices may be introduced when handling dangling edges.
In Figure \ref{fig:Join:dangling} we again try joining the same patterns as in Figure \ref{fig:Join:unconnect}, but this time we delete the $\xleftarrow{cites}$ edge.
As is required in Section \ref{sec:PCP}, this deletion isolates the second paper vertex, so the vertex is also removed.
When the pivoted subgraph isomorphism from the remaining structure to the second pattern is established, the vertex that the deleted edge was associated to is mapped to the first paper vertex in the second pattern, which is the new ending point of the restored edge.
But be aware that the new starting point may be the other unmatched paper vertex, as well as a additionally introduced vertex.
Neglecting this case will cause the bottom-right pattern to be lost, which is the representative of all tree-like patterns.

Moreover, it should be noted that Line \ref{algoLine:isomorphismcheck} actually embeds a procedure of checking for duplicated patterns.
If neglected, they will cause more duplicated patterns, joins, and support computations in later computations.
To efficiently check for duplicates, we can hash each produced patterns with the vertex labels on, and associated edges of the pivot.
When a new pattern is produced, we first use the hash table to find potential duplicates, and further verify them with a series of isomorphism checks.

Finally, the last performance overhead of this algorithm lies in the pivoted subgraph isomorphism checker at Line \ref{algoLine:vidCountUpdate2} and \ref{algoLine:isomorphismcheck}.
At this stage, we have not considered adapting any advanced heuristic optimizations of the original subgraph isomorphism problem to ours.
In the experiments we simply implemented a depth-first search checker, utilizing an index built on all label names of the large graph $G$.

\subsection{Optimization via VID-Lists}\label{sec:VID}

In \cite{FSG}, Kuramochi et al. used TID (Transaction Identifier) lists to optimize their FSM algorithm under the graph transaction setting.
Analogously, we propose VID (Vertex Identifier) lists to improve our efficiency both in the building block construction phase and the joining phase.
Both of our optimizations origin from the fact that for any patterns $\mathcal{P}_1\subseteq_f \mathcal{P}_2$, the set of vertices (transactions in Kuramochi's cases) matching $\mathcal{P}_1$, i.e., $M_G(\mathcal{P}_1)$, must be a superset of $M_G(\mathcal{P}_2)$.
This is essentially a reinforced version of the DCP, which enable us to reduce the number of vertices considered when counting the support of a candidate.
To utilize it, we have to maintain the IDs of all vertices in $M_G(P)$ as an ordered list for any $P$, instead of recording only its size.

Specifically, at Line \ref{algoLine:vidCount1} in Algorithm \ref{algo:BuildingBlock}, when extending $path$, we only need to consider $M_G(path)$ instead of all vertices in $G$.
In Algorithm \ref{algo:FNM}, for each enumerated pair of patterns at Line \ref{algoLine:vidJoin2}, we first intersect $M_G(F_{k-1}[i])$ and $M_G(F_{k-1}[j])$ in linear time.
If the number of results is below $\tau$, they need not be joined because vertices matching their shared super-patterns must be within the intersection.
If they pass the test, $M_G(F_{k-1}[i])\cap M_G(F_{k-1}[j])$ is saved, and at Line \ref{algoLine:vidCountUpdate2} we only need to verify the intersection instead of the whole $V(G)$ to count the support of the size-$K$ patterns.
Let's take the join in Figure \ref{fig:Join:unconnect} on the toy database in Figure \ref{fig:ToyData} for example.
The upper left pattern matches author $a_1$,$a_2$, and $a_3$, and the lower left one matches $a_1$,$a_2$,$a_3$, and $a_4$.
Since $a_4$ does not appear in the intersection, it will not be checked when computing the support of the joined pattern.
Vertices matching the pattern under consideration are stored again in VID lists at Line \ref{algoLine:vidUpdate1} of Algorithm \ref{algo:BuildingBlock} and Line \ref{algoLine:vidCountUpdate2} of Algorithm \ref{algo:FNM} for larger patterns' use.

In experiments, the VID-optimization reduces the running time by up to two orders of magnitude. In Section \ref{sec:experiments} we will discuss in detail the experimental results and the feasibility of this optimization.

\section{Experiments}\label{sec:experiments}

Our experiments were performed on two real datasets: EntityCube\footnote{http://entitycube.research.microsoft.com/} and ArnetMiner Citation Network\footnote{http://arnetminer.org/citation} \cite{Tang:07ICDM,Tang:08KDD,Tang:10TKDD,Tang:11ML}, the statistics of which as labeled graphs are presented in Table \ref{tab:dataset}.
Due to the intrinsic characteristics of them, the efficiency of our algorithm was mainly tested on the first one, while the second was used to showcase the distinctive form of knowledge our method is able to discover.
The algorithm was implemented in C\# and run on a 2.4G 16-core Intel Xeon PC with 72GB main memory.
The code optimization option was turned on in the compiler.
All reported time was in seconds.

\begin{table}
\centering
\begin{tabular}{c|ccccc}
\toprule
Dataset&$|V|$&$|L_V|$&$|E|$&$|\Sigma_V|$&$|\Sigma_E|$\\
\midrule
EntityCube&4,685,439&165,533&75,831&288&207\\
ArnetMiner&2,495,972&0&7,791,406&0&3\\
\bottomrule
\end{tabular}
\caption{Two datasets used in the experiments\label{tab:dataset}}
\end{table}

\subsection{Datasets}

The EntityCube system is a research prototype for exploring object-level search technologies, which automatically summarizes the Web for entities (such as people, locations and organizations).
We utilized the relationship network between person entities extracted by the system.
Specifically, with a list of people names as seeds, we queried the system using one name each time, and got related persons and the corresponding relationship names as return.
On the one hand, the seed persons and returned persons were used to form the vertex set $V$.
On the other hand, the system returns two types of relationship.
The name of one is in a plural form, such as ``politicians(Barack Obama, Bill Cliton)'', indicating the connection that they're both politicians.
Thus, the relationship name naturally served as vertex labels for the two associated entities.
The other type of relationship appears in a singular form, e.g., ``wife(Michelle Obama, Barack Obama)''.
They were interpreted as labeled edges between the corresponding vertices.

The ArnetMiner Citation Network dataset contains many papers with associated attribute information, as well as their citation relationship.
The dataset consists of five versions, while we use the fifth one.
We extracted all papers, authors, and conferences appearing in the data, and construct the vertex set with them.
Conferences of the same series and on different years are treated as identical.
There are only three types of labeled edges, i.e., ``writes'' between an author and a paper, ``accepts'' between a conference and a paper, and ``cites'' between a paper and another.
Because these edge label names actually implies the type of both the starting and ending vertices of an edge, we didn't employ any vertex label to avoid redundancy.
In the data, IDs are provided to uniquely denote papers and form citations, which were adopted by us.
However in the author and conference sections of each paper, only texts are presented.
Therefore, when converting them to the IDs in our algorithm, we required exact text-match and didn't perform any cleaning operation involving external data.

\subsection{Performance}

\begin{figure*}[t]
  \centering
  \subfigure[Performance (building block)]{
    \label{fig:EntityCube:bbtime}
    \includegraphics[width=0.65\columnwidth]{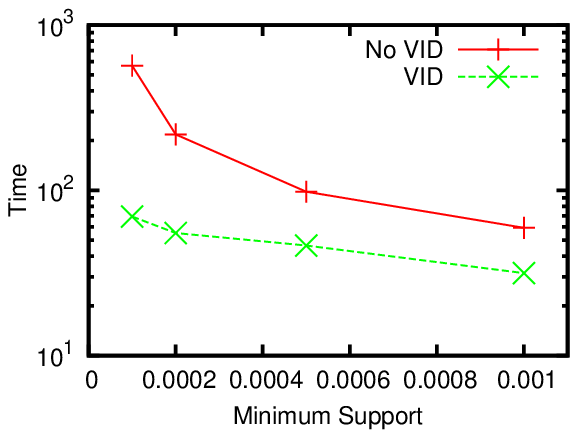}}
    \hspace{0.02\columnwidth}
  \subfigure[Performance (joining \& verifying)]{
    \label{fig:EntityCube:time}
    \includegraphics[width=0.65\columnwidth]{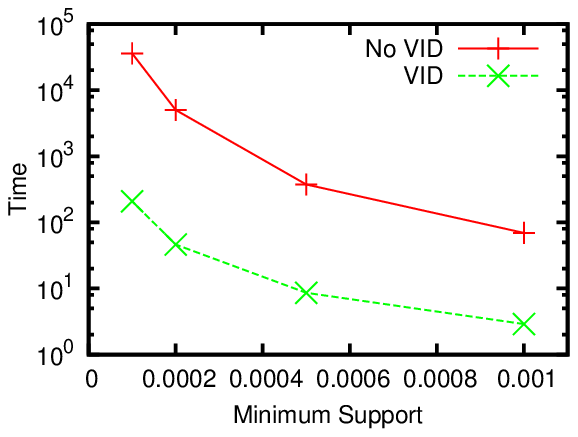}}
    \hspace{0.02\columnwidth}
    \subfigure[Filtering power of VID]{
    \label{fig:EntityCube:cand}
    \includegraphics[width=0.65\columnwidth]{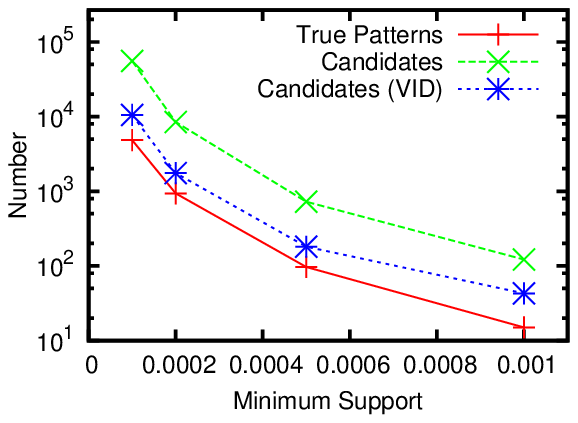}}
  \hspace{0.02\columnwidth}

  \subfigure[Verification time per candidate]{
    \label{fig:EntityCube:perval}
    \includegraphics[width=0.65\columnwidth]{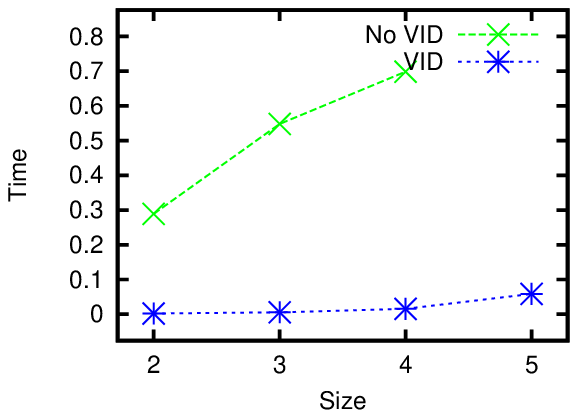}}
    \hspace{0.02\columnwidth}
  \subfigure[Percentage of different pattern types]{
    \label{fig:EntityCube:cycle}
    \includegraphics[width=0.65\columnwidth]{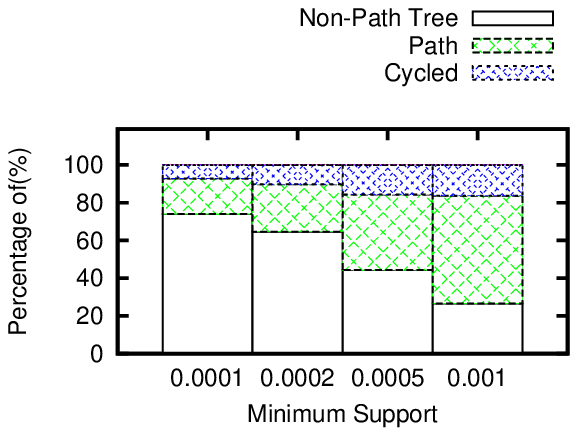}}
  \caption{Experiments on the EntityCube dataset}
  \label{fig:EntityCube} 
\end{figure*}

In this section, we report the performance of our algorithms for frequent neighborhood mining.
Since no previous work has addressed exactly the same problem, our experiments were dedicated to validating the feasibility of our VID optimization.
In practice, the running time of such a pattern mining algorithm is heavily influenced by the size of the result set.
Therefore, we decided to conduct the experiments on the EntityCube data with rich label names, for it has a potentially larger result set and the running time is more sensitive to parameters such as the minimum support $\tau$.
In all experiments, $\tau$ was chosen from $\{0.0001,0.0002,0.0005,0.001\}$ and all reported time was an average of five consecutive runs.

In Figure \ref{fig:EntityCube:bbtime} and \ref{fig:EntityCube:time}, we terminated the search after all frequent patterns of size below 4 were discovered.
It is clear that our VID optimization successfully accelerates both the building block construction and the join-verify phases by up to one and two orders of magnitude, respectively.
Analogous to the TID optimization \cite{FSG} for FSM, the advantages of the VID optimization are two-fold.
First, two patterns will not be joined if the intersection of their VID lists is smaller than $\tau$.
Therefore, the algorithm successfully avoids verifying false positives caused by joining unpromising pattern pairs, which is a vital overhead to the overall performance.
In Figure \ref{fig:EntityCube:cand}, the number of candidates with/without the VID-list-pruning, and the number of true patterns are illustrated.
This figure shows that the pruning helps narrow down the number of candidates by several times.
Second, for each pair of patterns that passes the pruning, the time spent on counting the support of the joining results is also reduced because vertices not in the intersection won't contribute to the count, thus are not checked.
Figure \ref{fig:EntityCube:perval} presents the average verification time for each non-path candidate where $\tau=0.0001$.
The x-axis denotes different stages of the search procedure, where candidates of size 2 to 5 are verified.
Obviously the VID optimization is significantly effective for verifying candidates of all sizes.
Particularly, without the optimization, the algorithm didn't finish the verification of size 5 in reasonable time.

\subsection{Interpretation of Mined Patterns}


\begin{figure*}[t]
  \centering
  \subfigure[94.6\% of authors have a co-author.]{
    \label{fig:ArnetMinerPatterns:94.6} 
    \begin{minipage}[b]{0.3\textwidth}
    \centering
    \includegraphics[scale=0.5]{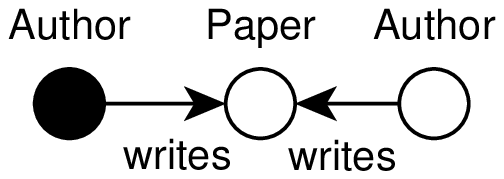}
    \end{minipage}
    }
    \hspace{0.02\columnwidth}
  \subfigure[89.0\% of conferences once accepted two papers from the same author.]{
    \label{fig:ArnetMinerPatterns:89.0} 
    \begin{minipage}[b]{0.3\textwidth}
    \centering
    \includegraphics[scale=0.5]{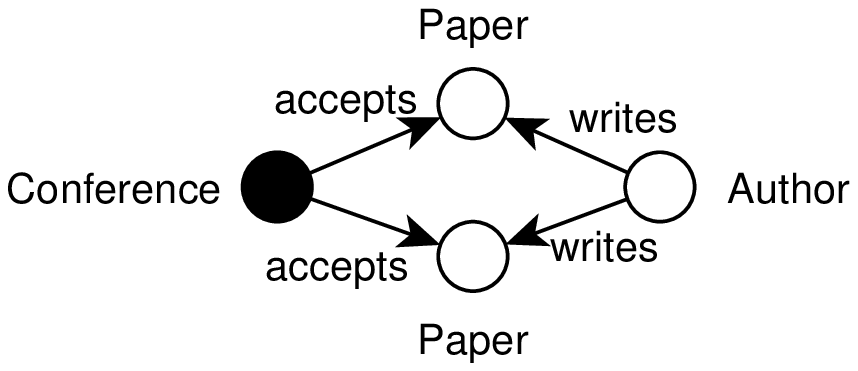}
    \end{minipage}
    }
    \hspace{0.02\columnwidth}
  \subfigure[78.9\% of authors have a paper with two co-authors.]{
    \label{fig:ArnetMinerPatterns:78.9} 
    \begin{minipage}[b]{0.3\textwidth}
    \centering
    \includegraphics[scale=0.5]{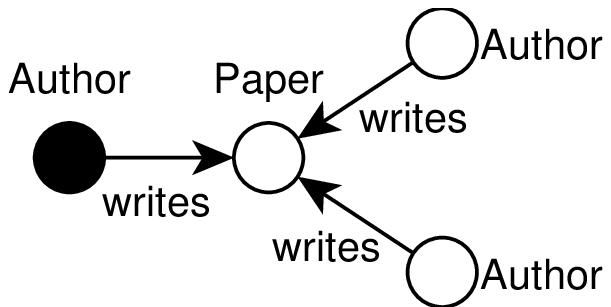}
    \end{minipage}
    }
        \hspace{0.02\columnwidth}
  \subfigure[For about 61.8\% of papers, two of their authors have other cooperations.]{
    \label{fig:ArnetMinerPatterns:61.8} 
    \begin{minipage}[b]{0.3\textwidth}
    \centering
    \includegraphics[scale=0.5]{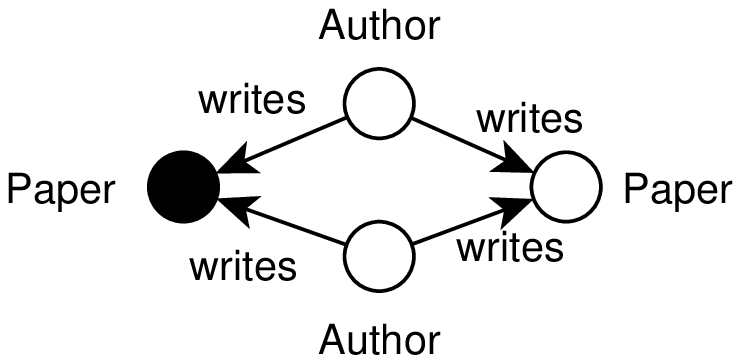}
    \end{minipage}
    }
        \hspace{0.02\columnwidth}
  \subfigure[37.9\% of authors have a co-author cooperating twice with him.]{
    \label{fig:ArnetMinerPatterns:37.9} 
    \begin{minipage}[b]{0.3\textwidth}
    \centering
    \includegraphics[scale=0.5]{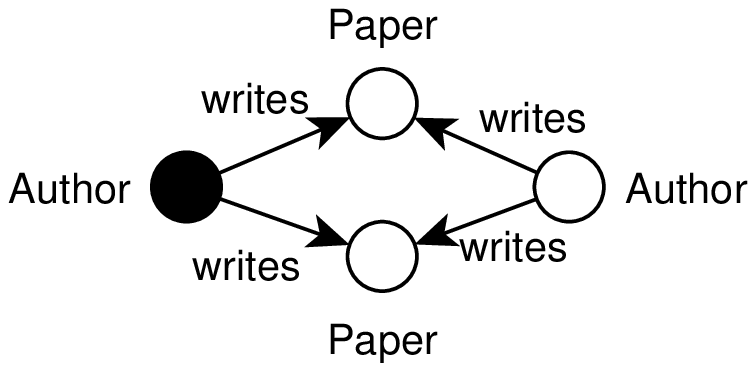}
    \end{minipage}
    }
        \hspace{0.02\columnwidth}
  \subfigure[31.4\% of authors have at least three papers.]{
    \label{fig:ArnetMinerPatterns:31.4} 
    \begin{minipage}[b]{0.3\textwidth}
    \centering
    \includegraphics[scale=0.5]{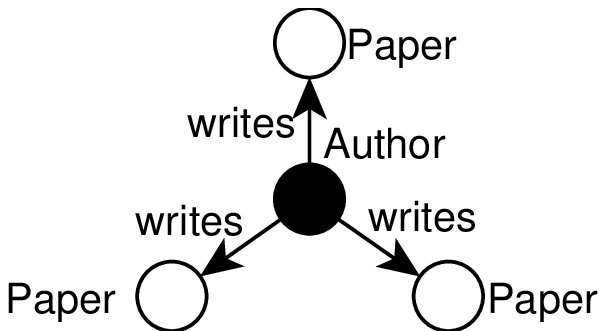}
    \end{minipage}
    }
        \hspace{0.02\columnwidth}
  \subfigure[27.6\% of authors have a cited paper.]{
    \label{fig:ArnetMinerPatterns:27.6} 
    \begin{minipage}[b]{0.3\textwidth}
    \centering
    \includegraphics[scale=0.5]{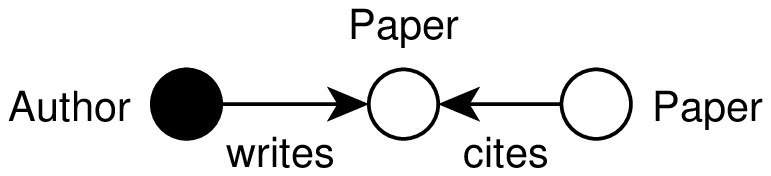}
    \end{minipage}
    }
        \hspace{0.02\columnwidth}
  \subfigure[25.4\% of authors have two papers accepted by one conference.]{
    \label{fig:ArnetMinerPatterns:25.4} 
    \begin{minipage}[b]{0.3\textwidth}
    \centering
    \includegraphics[scale=0.5]{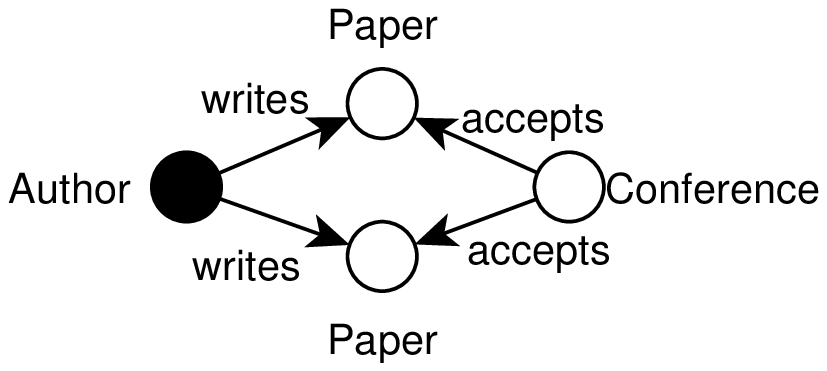}
    \end{minipage}
    }
        \hspace{0.02\columnwidth}
  \subfigure[23.0\% of conferences once accepted a paper and another paper this one cited.]{
    \label{fig:ArnetMinerPatterns:23.0} 
    \begin{minipage}[b]{0.3\textwidth}
    \centering
    \includegraphics[scale=0.5]{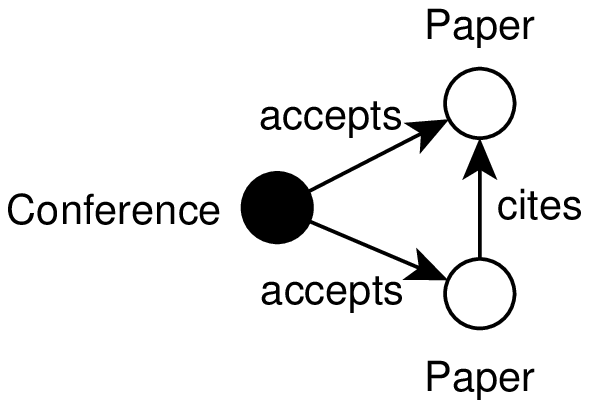}
    \end{minipage}
    }
        \hspace{0.02\columnwidth}
  \subfigure[19.0\% of authors have a paper that was cited at least twice.]{
    \label{fig:ArnetMinerPatterns:19.0} 
    \begin{minipage}[b]{0.3\textwidth}
    \centering
    \includegraphics[scale=0.5]{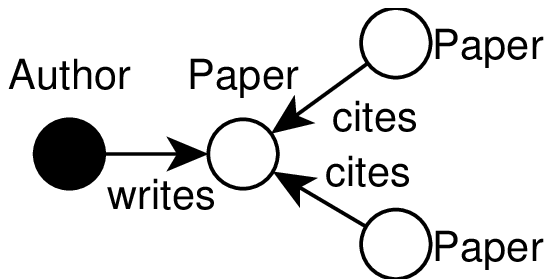}
    \end{minipage}
    }
\hspace{0.02\columnwidth}
  \subfigure[About 9.8\% of papers cite another paper of one of its authors.]{
    \label{fig:ArnetMinerPatterns:9.8} 
    \begin{minipage}[b]{0.3\textwidth}
    \centering
    \includegraphics[scale=0.5]{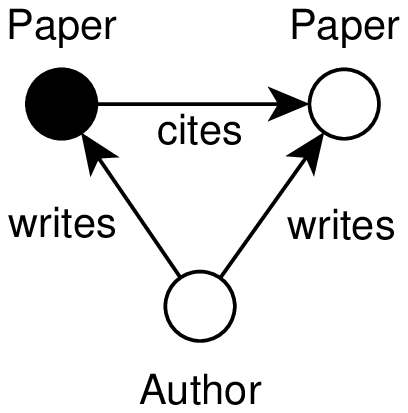}
    \end{minipage}
    }
    \hspace{0.02\columnwidth}
  \subfigure[6.3\% of conferences once accepted two papers citing each other.]{
    \label{fig:ArnetMinerPatterns:6.3} 
    \begin{minipage}[b]{0.3\textwidth}
    \centering
    \includegraphics[scale=0.5]{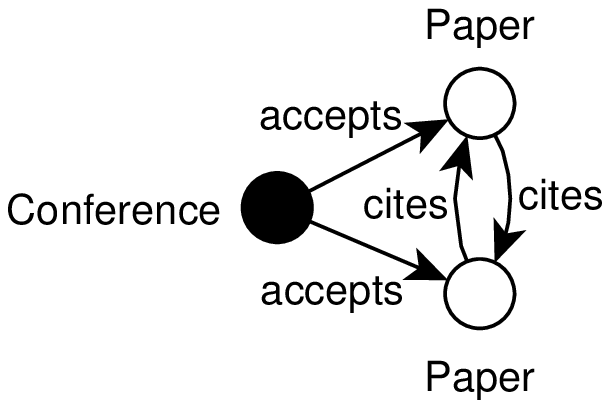}
    \end{minipage}
    }

  \caption{Selected interesting neighborhood patterns in the ArnetMiner dataset.}
  \label{fig:ArnetMinerPatterns} 
\end{figure*}

As is mentioned above, the major superiority of FNM over other graph pattern mining methods is that it discovers patterns with cycles, which is not targeted by others.
With the hands-on experiences of experimenting on both datasets, we realize that a cycled pattern can be viewed as a set of constraints with lower degree of freedom than a tree-like one of the same size.
Figure \ref{fig:EntityCube:cycle} shows the constitution of all patterns in EntityCube data, whose size are below 5.
Patterns with cycles actually make up around 10\% among all three types.
The trend also shows that when we decrease the support ratio and specify a pattern into its super-patterns, it is more difficult for a cycle to form, than a fork to appear.

However, once formed, patterns with cycles serve as a good complement to tree-like patterns.
Introducing them does not linearly increase our knowledge about the data being investigated, but actually makes a mutual reinforcement with tree-like ones.
As patterns from the ArnetMiner dataset have better interpretability, we selected some interesting neighborhood patterns mined from this dataset to demonstrate our points.
Besides the example given in Figure \ref{fig:AuthorPaper}, more patterns are displayed in Figure \ref{fig:ArnetMinerPatterns}.
By combining two or more of them, we can make very interesting discoveries about the academia.

For example, the support ratio of Figure \ref{fig:ArnetMinerPatterns:27.6} is lower than that of Figure \ref{fig:AuthorPaper:b} and \ref{fig:ArnetMinerPatterns:31.4}, which reflects the common sense that it is more difficult to get one's paper cited than to write more papers.
Besides, the small gap between the ratios of Figure \ref{fig:AuthorPaper:b} and \ref{fig:ArnetMinerPatterns:37.9} reveals the fact that most writers are willing to maintain a co-authoring relationship.
On the other hand, the ratios of Figure \ref{fig:AuthorPaper:b} and \ref{fig:ArnetMinerPatterns:25.4} together prove that an average author relatively favors a conference which once accepted his paper.
Moreover, Figure \ref{fig:ArnetMinerPatterns:78.9} alone points out that most of us (assuming that we all have papers) have a paper with no less than three authors.
Surprisingly, as Figure \ref{fig:ArnetMinerPatterns:6.3} indicates, there are even papers citing each other!
By checking the data we find two cases of such a phenomenon.
One is caused by the dataset itself.
The data treats books as conferences, and their chapters as papers.
Of course, chapters from the same book can cite each other.
This case is rare.
The other case is more often: an author simultaneously submitted two papers to the same conference and got them both accepted.
When preparing the camera-ready versions, he made them citing each other.

\begin{table}
\centering
\begin{tabular}{c|cccc}
\toprule
Vertex Type&Number&Time&Patterns&Cycled\\
\midrule
Conference&6,713&3,354&135&30\\
Author&916,979&9,469&163&24\\
Paper (sampled)&500,000&53,422&796&147\\
\bottomrule
\end{tabular}
\caption{Statistics of the runs on ArnetMiner\label{tab:ArnetMiner}}
\end{table}

For all patterns presented in this paper, we use support ratios w.r.t. vertices with a specified label, instead of the absolute count mentioned in the problem statement.
It's easy to implement, which only involves a small modification to the algorithm.
Suppose we want facts about all authors with a minimum support ratio of 10\%.
In Algorithm 1 and 2, when a scan on the entire $V(G)$ is required for support counting, we only scan those author vertices by accessing an index on the label names.
Moreover, when calling the modified algorithms, $\tau$ should be assigned with 10\% the number of all author vertices.
When mining patterns about authors, conferences, and papers, the support ratios were set to 1\%.
As the number of paper vertices is huge (over 1.5 million), the support calculation was performed on a subset of 0.5 million papers we sampled.
We didn't explore path patterns of size 4, or any pattern whose size exceeds 4.
Readers may refer to Table \ref{tab:ArnetMiner} for more running details.

\section{Related Work}

\subsection{Frequent Subgraph Mining}

The frequent subgraph mining problem is well-investigated by the literatures under the graph-transaction setting.
Among them, the most influential methods are AGM\cite{AGM}, FSG\cite{FSG}, and gSpan\cite{gSpan}.
The first and second adopts the apriori-based BFS scheme, and feature the vertex-incremental and edge-incremental approaches, respectively.
The last one falls into a pattern-growth-based DFS category.
Optimizations they utilized, such as canonical labeling and vertex invariants, are inspiring and potentially employable to our work.
The single-graph setting, however, is not so fully explored due to reasons we mentioned above.
Among the few support measures proposed, the Maximum Independent Set support and corresponding mining algorithms are studied in \cite{MIS1,MIS2_PathAsBuildingBlock}.
\cite{MinimumImageSupport} also defines a single-graph support called Minimum Image Support, which still doesn't make a intuitive one and has yet to be tested for easiness of handling.

\subsection{Frequent Tree Query Mining in Graphs}

In \cite{TreeQuery,TreeQuery2,GraphProperty}, the authors attempt to mine tree patterns in graphs, whose support measure resembles ours in the way that distinct matches of some vertices are counted, while the match conditions on the others are only existential.
They do not target patterns with cycles, which add much to the users' understand of the data.
Moreover, because \cite{TreeQuery,TreeQuery2} allow multiple vertices to be counted (in other words, as our ``pivots''), their problems are more complicated and thus do not completely follow and benefit from the well-solved apriori pattern mining scheme.
We argue that, patterns with more than three pivots may explode in the number, while bringing about some knowledge that is hard to explain and utilize.
Finally, these works both claimed that their mining algorithms support constants in the patterns, e.g., ``x\% of the authors once cited a paper published in KDD''.
Our problem setting supports multiple labels on a vertex.
Therefore, we can achieve it by simply adding the name of each vertex to its label set.
We can also modify our algorithm to implicitly perform such a data transformation.

\subsection{ILP Related Works}

In \cite{WARMR}, Dehaspe et al. introduced an inductive logic programming system for mining frequent patterns in a datalog database.
Their final products are rules, which are more advanced than ours.
However, they require \emph{language biases} as additional inputs to bound the search space.
In contrast, our method is completely unsupervised.
Methods learning horn clauses from knowledge bases such as \cite{LearnHornClause,RandomWorkLaoNi,AMIE} also resemble the Inductive Logic Programming category.
These works are characterized by a variety of metrics to evaluate the utility of a rule.
Since noise and scalability issues in real data are their main concerns, they adopt stricter language biases and the rules mined are of more limited forms.

\section{Future Work}

Under the current problem setting and solution, encouraging results have been achieved in terms of performance and result utility.
However, our work can still be further extended from the following aspects.
First, the definition of \emph{closed} neighborhood patterns may be introduced in a similar way as \cite{ClosedPattern}.
A pattern is closed, if there exists no proper super-pattern with the same support as it.
This definition is expected to significantly reduce the size of results, while preserving the most meaningful ones.
Second, the pivots may be allowed to be an edge to enable characterizing the ``neighborhood'' of an edge.
This generalized pattern introduces new semantics, e.g., ``x\% of all citations are made between papers from the same institutes.''
Third, according to \cite{gSpan}, the depth-first search approach outperforms the apriori-based (breadth-first) approach by an order of magnitude in the FSM task.
Therefore, it is interesting to explore its feasibility in our neighborhood mining task, which has been proved by us to share much in common with the subgraph mining one.

\section{Conclusion}

In this paper, we addressed mining single-graph databases and introduced the new neighborhood patterns as mining targets.
They have clear semantics and are not limited to tree-like shapes.
We formally defined the frequent neighborhood mining problem, and proved that it is as difficult as the frequent subgraph mining problem.
We indicated that the major difference between FNM and FSM in terms of solution is that our patterns have non-trivial building blocks, which are clearly separated by us via a theorem and proof.
After discussing possible optimizations, we conducted experiments on two real datasets to validate the efficiency and effectiveness of our method.
The algorithm is proved to be feasible, and shows an unique ability to provide users with especially interesting insights on the analyzed data.

\bibliographystyle{abbrv}
\bibliography{fnm}

\appendix

\emph{Proof of Theorem \ref{theorem:npc}}.
We prove it by reducing from the subgraph isomorphism problem.
Labels are ignored because it is a generalization of, thus reducible from, the non-label case.

Given an instance $\langle G_1,G_2\rangle$ of the subgraph isomorphism problem, we add a new vertex $v_1$ to $G_1$, and $v_2$ to $G_2$, respectively.
They are marked as pivots and edges are created from them to all vertices of the same graph.
Obviously, $G_1\subseteq G_2$ iff. $\langle G_1,v_1\rangle\subseteq_f \langle G_2,v_2\rangle$.
By solving the pivoted subgraph isomorphism problem $\langle\langle G_1,v_1\rangle,\langle G_2,v_2\rangle\rangle$ we are able to answer whether $G_1\subseteq G_2$.
So our problem is \textsc{np}-hard.
The solution of an instance of our problem is verified in polynomial time.
Therefore, our problem is \textsc{np}-complete.

\end{document}